\documentclass[runningheads]{llncs}

\usepackage[T1]{fontenc}
\usepackage{amssymb}
\usepackage{subcaption}
\usepackage{tikz}
\usetikzlibrary{automata, arrows.meta, positioning, calc}

\tikzstyle{accepting}=[double distance=2pt, outer sep=1pt+\pgflinewidth]
\tikzstyle{accepting}=[path picture={%
  \draw let
    \p1 = (path picture bounding box.east),
    \p2 = (path picture bounding box.center)
    in
      (\p2) circle (\x1 - \x2 - 2pt);
  }]

\bibliographystyle{splncs04}

\begin{document}

\title{Universal First-Order Quantification over
  Automata\thanks{Research reported in this paper was supported in part
    by an Amazon Research Award, Fall 2022 CFP. Any opinions, findings, and
    conclusions or recommendations expressed in this material are
    those of the authors and do not reflect the views of Amazon.}}

\author{Bernard Boigelot\orcidID{0009-0009-4721-3824} \and
Pascal Fontaine\orcidID{0000-0003-4700-6031} \and
Baptiste Vergain\orcidID{0009-0003-5545-4579}}

\authorrunning{B. Boigelot, P. Fontaine, and B. Vergain}

\institute{Montefiore Institute, B28, University of Liège, Belgium}

\maketitle

\vspace*{-1ex}
\begin{abstract}
  Deciding formulas mixing arithmetic and uninterpreted predicates is
  of practical interest, notably for applications in
  verification. Some decision procedures consist in building by
  structural induction an automaton that recognizes the set of models
  of the formula under analysis, and then testing whether this
  automaton accepts a non-empty language. A drawback is that universal
  quantification is usually handled by a reduction to existential
  quantification and complementation. For logical formalisms in which
  models are encoded as infinite words, this hinders the practical use
  of this method due to the difficulty of complementing infinite-word
  automata. The contribution of this paper is to introduce an
  algorithm for directly computing the effect of universal first-order
  quantifiers on automata recognizing sets of models, for formulas
  involving natural numbers encoded in unary notation.  This makes it
  possible to apply the automata-based approach to obtain
  implementable decision procedures for various arithmetic theories.
\end{abstract}

\keywords{Infinite-word automata, first-order logic, quantifier
  elimination,\\
satisfiability}

\section{Introduction}

Automated reasoning with arithmetic theories is of primary importance,
notably for verification, where Satisfiability Modulo Theories (SMT)
solvers are regularly used to discharge proof obligations.  It is well
known however that mixing arithmetic with uninterpreted symbols
quickly leads to undecidable languages.  For instance, extending
Presburger arithmetic, i.e., the first-order additive theory of
integer numbers, with just one uninterpreted unary predicate makes
it undecidable~\cite{Downey72,Halpern91,Speranski13}.  There exist
decidable fragments mixing arithmetic and uninterpreted symbols that
are expressive enough to be interesting, for instance, the monadic
second-order theory of $\mathbb{N}$ under one successor (S1S).

The decidability of S1S has been established thanks to the concept of
infinite-word automaton~\cite{Buchi62}.  In order to decide whether a
formula $\varphi$ is satisfiable, the approach consists in building an
automaton that recognizes the set of its models, encoded in a suitable
way, and then checking that this automaton accepts a non-empty
language.  Such an automaton has one separate input tape for each
first-order and second-order free variable of $\varphi$. It is
constructed by starting from elementary automata representing the
atoms of $\varphi$, and then translating the effect of Boolean
connectives and quantifiers into corresponding operations over automata.
For instance, applying an existential quantifier simply amounts to
removing from the automaton the input tape associated to the
quantified variable. Universal quantification reduces to existential
quantification thanks to the equivalence $\forall x\, \varphi
\,\equiv\, \neg\exists x \,\neg\varphi$.

Even though this approach has originally been introduced as a purely
theoretical tool, it is applied in practice to obtain usable decision
procedures for various logics.  In particular, the tool
MONA~\cite{Klarlund97} uses this method to decide a restricted version
of S1S, and tools such as LASH~\cite{BL04} and Shasta~\cite{SKR98} use
a similar technique to decide Presburger arithmetic. The former tool
also generalizes this result by providing an implemented decision
procedure for the first-order additive theory of mixed integer and
real variables~\cite{BJW05}.

A major issue in practice is that the elimination of universal
quantifiers relies on complementation, which is an operation that is
not easily implemented for infinite-word
automata~\cite{Safra88,Vardi07}.  Actual implementations of
automata-based decision procedures elude this problem by restricting
the language of interest or the class of automata that need to be
manipulated. For instance, the tool MONA only handles \textit{Weak}\,
S1S (WS1S) which is, schematically, a restriction of S1S to finite
subsets of natural numbers~\cite{Buchi60}. The tool LASH handles the
mixed integer and real additive arithmetic by working with
\textit{weak deterministic automata}, which are a restricted form of
infinite-word automata admitting an easy complementation
algorithm~\cite{BJW05}.

The contribution of this paper is to introduce a direct algorithm for
computing the effect of universal first-order quantification over
infinite-word automata. This is an essential step towards practical
decision procedures for more expressive fragments mixing arithmetic
with uninterpreted symbols. The considered automata are those that
recognize models of formulas over natural numbers encoded in
unary notation. This algorithm does not rely on complementation, and
can be implemented straightforwardly on unrestricted infinite-word
automata.  As an example of its potential applications, this algorithm
leads to a practically implementable decision procedure for the
first-order theory of natural numbers with the order relation and
uninterpreted unary predicates. It also paves the way to a decision
procedure for SMT solvers for the UFIDL (Uninterpreted
Functions and Integer Difference Logic) logic with
only unary predicates.

\section{Basic notions}
\label{sec:basics}

\subsection{Logic}

We address the problem of deciding satisfiability for formulas
expressed in first-order structures of the form $(\mathbb{N}, R_1,
R_2, \ldots)$, where $\mathbb{N}$ is the \textit{domain} of natural
numbers, and $R_1$, $R_2$, \ldots\ are (interpreted)
\textit{relations} over tuples of values in $\mathbb{N}$. More precisely, each
$R_i$ is defined as a relation $R_i \subseteq \mathbb{N}^{\alpha_i}$ for some
$\alpha_i > 0$ called the \textit{arity} of $R_i$.

The formulas in such a structure involve \textit{first-order
variables} $x_1$, $x_2$, \ldots, and \textit{second-order variables}
$X_1$, $X_2$, \ldots\ Formulas are recursively defined as
\begin{itemize}
\item
$\top$, $\bot$,
$x_i = x_j$, $X_i = X_j$,
$X_i(x_j)$ or $R_i(x_{j_1}, \ldots, x_{j_{\alpha_i}})$, where $i, j, j_1, j_2,
\ldots \in {\mathbb{N}}_{>0}$ (\textit{atomic formulas}),
\item
$\varphi_1 \wedge \varphi_2$,
 $\varphi_1 \vee \varphi_2$ or $\neg \varphi$, where
$\varphi_1$, $\varphi_2$ and $\varphi$ are formulas, or
\item
$\exists x_i\, \varphi$ or $\forall x_i\, \varphi$,
where $\varphi$ is a formula.
\end{itemize}
We write $\varphi(x_1, \ldots, x_k, X_1, \ldots, X_{\ell})$ to express
that $x_1, \ldots, x_k, X_1, \ldots, X_{\ell}$ are the free variables
of $\varphi$, i.e., that $\varphi$ does not involve other unquantified
variables.

An \textit{interpretation} $I$ for a formula $\varphi(x_1, \ldots,
x_k, X_1, \ldots, X_{\ell})$ is an assignment of values $I(x_i) \in
\mathbb{N}$ for all $i \in [1, k]$ and $I(X_j) \subseteq \mathbb{N}$
for all $j \in [1, \ell]$ to its free variables. An interpretation $I$ that
makes $\varphi$ true, which is denoted by $I \models \varphi$, is  called
a \textit{model} of $\varphi$.

The semantics is defined in the usual way. One has
\begin{itemize}
\item
$I \models \top$ and $I \not\models \bot$ for every $I$.
\item
$I \models x_i = x_j$ and $I \models X_i = X_j$ iff (respectively)
$I(x_i) = I(x_j)$ and $I(X_i) = I(X_j)$.
\item
$I \models X_i(x_j)$ iff $I(x_j) \in I(X_i)$.
\item
$I \models R_i(x_{j_1}, \ldots, x_{j_{\alpha_i}})$ iff
$(I(x_{j_1}), \ldots, I(x_{j_{\alpha_i}})) \in R_i$.
\item
$I \models \varphi_1 \wedge \varphi_2$, $I \models \varphi_1 \vee \varphi_2$
and $I \models \neg\varphi$ iff (respectively) $(I \models \varphi_1) \wedge
(I \models \varphi_2)$, $(I \models \varphi_1) \vee
(I \models \varphi_2)$, and $I \not\models \varphi$.
\item
$I \models \exists x_i\, \varphi(x_1, \ldots, x_k, X_1, \ldots, X_{\ell})$
iff there exists $n \in \mathbb{N}$ such that
$I[x_i = n] \models \varphi(x_1, \ldots,$ $x_k, X_1, \ldots, X_{\ell})$.
\item
$I \models \forall x_i\, \varphi(x_1, \ldots, x_k, X_1, \ldots, X_{\ell})$
iff for every $n \in \mathbb{N}$, one has
$I[x_i = n] \models \varphi(x_1, \ldots, x_k,$ $X_1, \ldots, X_{\ell})$.
\end{itemize}
In the two last rules, the notation $I[x_i=n]$, where $n \in \mathbb{N}$,
stands for the extension of the interpretation $I$ to one additional
first-order variable
$x_i$ that takes the value $n$, i.e., the interpretation such that
$I[x_i=n](x_j) = I(x_j)$ for all $j \in [1, k]$ such that $j \neq i$,
$I[x_i=n](x_i) = n$, and
$I[x_i=n](X_j) = I(X_j)$ for all $j \in [1, \ell]$.

A formula is said to be \textit{satisfiable} if it admits a model.

\subsection{Automata}
\label{sec:automata}

A finite-word or infinite-word automaton is a tuple ${\cal A} =
(\Sigma, Q, \Delta, Q_0, F)$ where $\Sigma$ is a finite
\textit{alphabet}, $Q$ is a finite set of \textit{states}, $\Delta
\subseteq Q \times (\Sigma \cup \{ \varepsilon \}) \times Q$ is a
\textit{transition relation}, $Q_0 \subseteq Q$ is a set of
\textit{initial states}, and $F \subseteq Q$ is a set of
\textit{accepting states}.

A \textit{path} of $\cal A$ from $q_0$ to $q_m$, with $q_0, q_m \in Q$
and $m \geq 0$, is a finite sequence $\pi = (q_0, a_0, q_1); (q_1,
a_1, q_2); \ldots; (q_{m-1}, a_{m-1}, q_m)$ of transitions from
$\Delta$. The finite word $w \in \Sigma^*$ \textit{read} by $\pi$ is
$w = a_0a_1 \ldots a_{m-1}$; the existence of such a path is denoted
by $q_0 \stackrel{w}{\rightarrow} q_m$. A \textit{cycle} is a
non-empty path from a state to itself.
If ${\cal A}$ is a finite-word automaton, then a path $\pi$ from
$q_0$ to $q_m$ is \textit{accepting} if $q_m \in F$. A word $w \in
\Sigma^*$ is \textit{accepted} from the state $q_0$ if there exists
an accepting path originating from $q_0$ that reads $w$.

For infinite-word automata, we use a Büchi acceptance condition for
the sake of simplicity, but the results of this paper
straightforwardly generalize to other types of infinite-word automata.
If $\cal A$ is an infinite-word automaton, then a
\textit{run} of ${\cal A}$ from a
state $q_0 \in Q$ is an infinite sequence $\sigma = (q_0, a_0, q_1);
(q_1, a_1, q_2); \ldots$ of transitions from $\Delta$. This run reads
the infinite word $w = a_0 a_1 \ldots \in \Sigma^{\omega}$. The run
$\sigma$ is \textit{accepting} if the set $\mathit{inf}(\sigma)$
formed by the states $q_i$ that occur infinitely many times in
$\sigma$ is such that $\mathit{inf}(\sigma) \cap F \neq \emptyset$,
i.e., there exists a state in $F$ that is visited infinitely often by
$\sigma$. A word $w \in \Sigma^{\omega}$ is \textit{accepted} from the
state $q_0 \in Q$ if there exists an accepting run from $q_0$ that
reads $w$.

For both finite-word and infinite-word automata, a word $w$ is
\textit{accepted} by $\cal A$ if it is accepted from an initial state
$q_0 \in Q_0$. The set of all words accepted from a state $q \in Q$
(resp. by $\cal A)$ forms the \textit{language} $L({\cal A}, q)$
accepted from $q$ (resp. $L({\cal A})$ accepted by $\cal A$).  An
automaton accepting $L({\cal A}, q)$ can be derived from ${\cal A}$ by
setting $Q_0$ equal to $\{ q \}$. The language of finite-words $w$
read by paths from $q_1$ to $q_2$, with $q_1, q_2 \in Q$, is denoted by
$L({\cal A}, q_1, q_2)$; a finite-word automaton accepting this
language can be obtained from $\cal A$ by setting $Q_0$ equal to $\{
q_1 \}$ and $F$ equal to $\{ q_2 \}$. A language is said to be
\textit{regular} (resp.  \textit{$\omega$-regular}) if it can be
accepted by a finite-word (resp. an infinite-word) automaton.

\section{Deciding Satisfiability}
\label{sec:decision}

\subsection{Encoding Interpretations}

In order to decide whether a formula $\varphi(x_1, \ldots, x_k, X_1,
\ldots, X_{\ell})$ is satisfiable, Büchi introduced the idea of
building an automaton that accepts the set of all models of $\varphi$,
encoded in a suitable way, and then checking whether it accepts a
non-empty language~\cite{Buchi60,Buchi62}.

A simple encoding scheme consists in representing the value of
first-order variables $x_i$ in \textit{unary notation}: A number $n
\in \mathbb{N}$ is encoded by the infinite word $0^n 1 0^{\omega}$
over the alphabet $\{ 0, 1 \}$, i.e., by a word in which the symbol
$1$ occurs only once, at the position given by $n$. This leads to a
compatible encoding scheme for the values of second-order variables
$X_j$: a predicate $P \subseteq \mathbb{N}$ is encoded by the infinite
word $a_0 a_1 a_2 \ldots$ such that for every $n \in \mathbb{N}$, $a_n
\in \{ 0, 1 \}$ satisfies $a_n = 1$ iff $n \in P$, i.e., if $P(n)$
holds.

Encodings for the values of first-order variables $x_1$, \ldots, $x_k$
and second-order variables $X_1$, \ldots, $X_{\ell}$ can be combined
into a single word over the alphabet $\Sigma = \{ 0, 1 \}^{k+\ell}$: A
word $w \in \Sigma^{\omega}$ encodes an interpretation $I$ for those
variables iff $w = (a_{0,1}, \ldots, a_{0,k+\ell})(a_{1,1}, \ldots,
a_{1,k+\ell}) \ldots$, where for each $i \in [1, k]$,
$a_{0,i}a_{1,i}\ldots$ encodes $I(x_i)$, and for each $j \in [1,
  \ell]$, $a_{0,k+j}a_{1,k+j}\ldots$ encodes $I(X_j)$.  Note that not all
infinite words over $\Sigma$ form valid encodings: For each
first-order variable $x_i$, an encoding must contain exactly one occurrence
of the symbol $1$ for the $i$-th component of its tuple symbols.
Assuming that the set of variables is clear from the context, we write
$e(I)$ for the encoding of $I$ with respect
to those variables.

\subsection{Automata Recognizing Sets of Models}

Let $S$ be a set of interpretations for $k$ first-order and $\ell$
second-order variables. The set of encodings of the elements of $S$
forms a language $L$ over the alphabet $\{0, 1\}^{k+\ell}$. If this
language is $\omega$-regular, then we say that an automaton $\cal A$
that accepts $L$ \textit{recognizes}, or \textit{represents}, the set
$S$. Such an automaton can be viewed as having $k + \ell$ input tapes
reading symbols from $\{ 0, 1 \}$, each of these tapes being
associated to a variable. Equivalently, we can write the label of a
transition $(q_1, (a_1, \ldots, a_{k+\ell}), q_2) \in \Delta$ as
$\stackrel{(a_{k+1}, \ldots, a_{k+\ell})}{V}$ where $V$ is the set of
the variables $x_i$, with $i \in [1, k]$, for which $a_i = 1$.  In
other words, each transition label distinct from $\varepsilon$
specifies the set of first-order variables whose value corresponds to
this transition, and provides one symbol for each second-order
variable. For each $x_i \in V$, we then say that $x_i$ is
\textit{associated} to the transition.  Note that every transition for
which $V \neq \emptyset$ can only be followed at most once in an
accepting run. Any automaton recognizing a set of valid encodings can
therefore easily be transformed into one in which such transitions do
not appear in cycles, and that accepts the same language.

\begin{figure}
\centering
\vspace*{-3ex}
\begin{subfigure}[t]{0.45\textwidth}
\begin{tikzpicture}
  \node (q0) [state, initial, initial text = {}] {$q_0$};
  \node (q1) [state, right = of q0] {$q_1$};
  \node (q2) [state, accepting, right = of q1] {$q_2$};

  \path [-stealth, thick]
     (q0) edge [loop above] node
         {$\begin{array}{c}~\\(0), (1)\end{array}$} ()
     (q1) edge [loop above] node
         {$\begin{array}{c}~\\(0),(1)\end{array}$} ()
     (q2) edge [loop above] node
         {$\begin{array}{c}~\\(0),(1)\end{array}$} ()
     (q0) edge [below, inner sep = 10pt] node
         {$\begin{array}{c}(0), (1)\\\{ x_1 \}\end{array}$} (q1)
     (q1) edge [below, inner sep = 10pt] node
         {$\begin{array}{c}(1)\\\{ x_2 \}\end{array}$} (q2);
\end{tikzpicture}
\vspace*{-1ex}
\caption{$x_1 < x_2 \wedge X_1(x_2)$}
\label{fig:automata-formula}
\end{subfigure}
\begin{subfigure}[t]{0.45\textwidth}
\begin{tikzpicture}
  \node (q0) [state, initial, initial text = {}] {$q_0$};
  \node (q1) [state, right = of q0] {$q_1$};
  \node (q2) [state, accepting, right = of q1] {$q_2$};

  \path [-stealth, thick]
     (q0) edge [loop above] node
         {$\begin{array}{c}~\\(0), (1)\end{array}$} ()
     (q1) edge [loop above] node
         {$\begin{array}{c}~\\(0),(1)\end{array}$} ()
     (q2) edge [loop above] node
         {$\begin{array}{c}~\\(0),(1)\end{array}$} ()
     (q0) edge [below, inner sep = 10pt] node
         {$\begin{array}{c}(0), (1)\\\{ x_1 \}\end{array}$} (q1)
     (q1) edge [below, inner sep = 10pt] node
         {$\begin{array}{c}(1)\\~\end{array}$} (q2);
\end{tikzpicture}
\vspace*{-1ex}
\caption{$\exists x_2 (x_1 < x_2 \wedge X_1(x_2))$}
\label{fig:automata-formula-exists}
\end{subfigure}
\caption{Automata recognizing sets of models.}
\label{fig:automata-notations}
\end{figure}
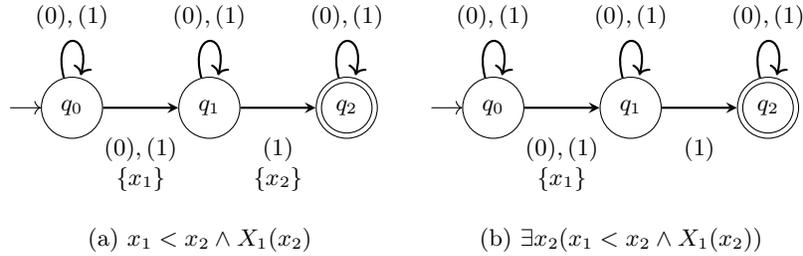

An example of an automaton recognizing the set of models of the
formula $\varphi(x_1, x_2, X_1) = x_1 < x_2 \,\wedge\, X_1(x_2)$ is
given in Figure~\ref{fig:automata-formula}.
For the sake of clarity, labels of transitions sharing the same
origin and destination are grouped together, and empty sets of variables are
omitted.

\subsection{Decision Procedure}
\label{sec:sub-decision}

For the automata-based approach to be applicable, it must be possible
to construct elementary automata recognizing the models of atomic
formulas. This is clearly the case for atoms of the form $x_i = x_j$,
$X_i = X_j$ and $X_i(x_j)$, and this property must also hold for each
relation $R_i$ that belongs to the structure of interest; in other
words, the atomic formula $R_i(x_1, x_2, \ldots, x_{\alpha_i})$ must
admit a set of models whose encoding is $\omega$-regular.  With the
positional encoding of natural numbers, this is the case in particular
for the order relation $x_i < x_j$ and the successor relation $x_j =
x_i + 1$. Note that one can easily add supplementary variables to an
automaton, by inserting a new component in the tuples of its alphabet,
and making this component read a symbol 1 at any single position of a
run for first-order variables, and any symbol at any position for
second-order ones. Reordering the variables is a similarly immediate
operation.

After automata recognizing the models of atomic formulas have been
obtained, the next step consists in combining them recursively by
following the syntactic structure of the formula to be decided.
Let us denote by $L_\varphi$ the language
of encodings of all the models of a formula $\varphi$, i.e.,
$L_\varphi = \{ e(I) \mid I \models \varphi \}$.

For the Boolean operator $\wedge$, we have $L_{\varphi_1 \wedge
  \varphi_2} = L_{\varphi_1} \cap L_{\varphi_2}$, where $\varphi_1$
and $\varphi_2$ are formulas over the same free variables. Similarly,
we have $L_{\varphi_1 \vee \varphi_2} = L_{\varphi_1} \cup
L_{\varphi_2}$. The case of the complement operator $\neg$ is slightly
more complicated, since the complement of a language of encodings
systematically contains words that do not validly encode an
interpretation.  The set of models of a formula $\neg
\varphi$ is encoded by the language $\overline{L_\varphi}\cap
L_{\mathit{valid}}$, where $L_{\mathit{valid}}$ is the language of all
valid encodings consistent with the free variables of $\varphi$.
It is easily seen that this language is $\omega$-regular.

It remains to compute the effect of quantifiers. The language
$L_{\exists x_i\varphi}$ can be derived from $L_{\varphi}$ by
removing the $i$-th component from each tuple symbol, i.e., by
applying a mapping $\Pi_{\neq i}: \Sigma^{k+\ell} \rightarrow
\Sigma^{k+\ell-1}: (a_1, \ldots, a_{k+\ell}) \mapsto (a_1, \ldots,
a_{i-1}, a_{i+1}, \ldots, a_{k+\ell})$ to each symbol of the alphabet.
Indeed, the models of $\exists x_i\, \varphi$ correspond exactly to the
models of $\varphi$ in which the variable $x_i$ is removed. In the rest
of this paper, we will use the notation $\Pi_{\neq i}(w)$, where $w$ is a finite
or infinite word, to express the result of applying $\Pi_{\neq i}$ to each
symbol in $w$. If $L$ is a language, then we write $\Pi_{\neq i}(L)$ for
the language $\{ \Pi_{\neq i}(w) \mid w \in L \}$.

Finally, universal quantification can be reduced to existential
quantification:
For computing $L_{\forall x_i\varphi}$, we use the equivalence $\forall
x_i\, \varphi \,\equiv \, \neg \exists x_i\, \neg \varphi$ which yields
$L_{\forall x_i\varphi} = \overline{L_{\exists x_i \neg\varphi}}\cap
L_{\mathit{valid}}$.

\subsection{Operations over Automata}
\label{sec:operations}

We now discuss how the operations over languages mentioned in
Section~\ref{sec:sub-decision} can be computed over infinite-word
automata.  Given automata ${\cal A}_1$ and ${\cal A}_2$, automata
${\cal A}_1 \cap {\cal A}_2$ and ${\cal A}_1 \cup {\cal A}_2$
accepting respectively $L({\cal A}_1) \cap L({\cal A}_2)$ and $L({\cal
  A}_1) \cup L({\cal A}_2)$ can be obtained by the so-called product
construction. The idea consists in building an automaton $\cal A$ that
simulates the combined behavior of ${\cal A}_1$ and ${\cal A}_2$ on
identical input words. The states of $\cal A$ need to store additional
information about the accepting states that are visited in ${\cal
  A}_1$ and ${\cal A}_2$. For ${\cal A}_1 \cap {\cal A}_2$, one
ensures that each accepting run of $\cal A$ correspond to an accepting
run in both ${\cal A}_1$ and ${\cal A}_2$. For ${\cal A}_1 \cup {\cal
  A}_2$, the condition is that the run should be accepting in ${\cal
  A}_1$ or ${\cal A}_2$, or both. A complete description of the
product construction for Büchi automata is given in~\cite{Thomas90}.

Modifying the alphabet of an automaton in order to implement the
effect of an existential quantification is a simple operation.  As an
example, Figure~\ref{fig:automata-formula-exists} shows an automaton
recognizing the set of models of $\exists x_2 (x_1 < x_2 \wedge
X_1(x_2))$, obtained by removing all occurrences of the variable $x_2$
from transition labels.  Testing whether the language accepted by an
automaton is not empty amounts to checking the existence of a
reachable cycle that visits at least one accepting state, which is
simple as well.

The only problematic operation is complementation, which consists in
computing from an automaton $\cal A$ an automaton that accepts the
language $\overline{L({\cal A})}$. Although it preserves
$\omega$-regularity, this operation is difficult to perform on Büchi
automata~\cite{Safra88,Vardi07}. In the context of our decision
procedure, it is only useful for applying universal quantifiers.
Indeed, other instances of the negation operator in formulas can be
pushed inwards until they are applied to atomic formulas, and it is
easy to construct the complement of the elementary automata
recognizing the models of those atomic formulas, provided that for
each relation $R_i$ in the structure of interest, an automaton
recognizing $\{ (x_1, \ldots, x_{\alpha_i}) \in \mathbb{N}^{\alpha_i}
\mid \neg R_i(x_1, \ldots, x_{\alpha_i})\}$ is available.  In order to
eliminate the need for complementation, we develop in the next section
a direct algorithm for computing the effect of universal quantifiers
on automata recognizing sets of models.

\section{Universal Quantification}
\label{sec:universal-quant}

\subsection{Principles}

Let $\varphi(x_1, \ldots, x_k, X_1, \ldots, X_{\ell})$, with $k > 0$
and $\ell \geq 0$, be a formula. Our goal is to compute an automaton
${\cal A}'$ accepting $L_{\forall x_i \varphi}$, given an automaton
$\cal A$ accepting $L_{\varphi}$ and $i \in [1, k]$.

By definition of universal quantification, we have $I \models \forall
x_i\, \varphi$ iff $I[x_i = n] \models \varphi$ holds for every $n \in
\mathbb{N}$. In other words, $L_{\forall x_i \varphi}$ contains $e(I)$
iff $L_{\varphi}$ contains $e(I[x_i=n])$ for every $n \in
\mathbb{N}$. Conceptually, we can then obtain $L_{\forall x_i
  \varphi}$ by defining for each $n \in \mathbb{N}$ the language $S_n
= \{ e(I) \mid e(I[x_i=n]) \in L_{\varphi} \}$, which yields
$L_{\forall x_i \varphi} = \bigcap_{n \in \mathbb{N}} S_n$.

An automaton ${\cal A}'$ accepting $L_{\forall x_i \varphi}$ can be
obtained as follows. Each language $S_n$, with $n \in \mathbb{N}$, is
accepted by an automaton ${\cal A}_n$ derived from $\cal A$ by
restricting the transitions associated to $x_i$ to be followed only
after having read exactly $n$ symbols. In other words, the accepting
runs of ${\cal A}_n$ correspond to the accepting runs of $\cal A$ that
satisfy this condition.  After imposing this restriction, the variable
$x_i$ is removed from the set of variables managed by the automaton,
i.e., the operator $\Pi_{\neq i}$ is applied to the language that this
automaton accepts, so as to get $S_n = L({\cal A}_n)$. The automaton
${\cal A}'$ then corresponds to the infinite intersection product of
the automata ${\cal A}_n$ for all $n \in \mathbb{N}$, i.e., an
automaton that accepts the infinite intersection $\bigcap_{n \in
  \mathbb{N}} S_n$.  We show in the next section how to build ${\cal
  A}'$ by means of a finite computation.

\subsection{Construction}
\label{sec:construction}

The idea of the construction is to make ${\cal A}'$ simulate the join
behavior of the automata ${\cal A}_n$, for all $n \in \mathbb{N}$, on
the same input words. This can be done by making each state of ${\cal
  A}'$ correspond to one state $q_n$ in each ${\cal A}_n$, i.e., to an
infinite tuple $(q_0, q_1, \ldots)$.  By definition of ${\cal A}_n$,
there exists a mapping $\mu: Q_n \rightarrow Q$, where $Q_n$ and $Q$
are respectively the sets of states of ${\cal A}_n$ and $\cal A$, such
that whenever a run of ${\cal A}_n$ visits $q_n$, the corresponding
run of ${\cal A}$ on the same input word visits $\mu(q_n)$.

If two automata ${\cal A}_{n_1}$ and ${\cal A}_{n_2}$, with $n_1, n_2
\in \mathbb{N}$, are (respectively) in states $q_{n_1}$ and $q_{n_2}$
such that $\mu(q_{n_1}) = \mu(q_{n_2})$, then they share the same
future behaviors, except for the requirement to follow a transition
associated to $x_i$ after having read (respectively) $n_1$ and $n_2$
symbols. It follows that the states of ${\cal A}'$ can be
characterized by sets of states of $\cal A$: The infinite tuple $(q_0,
q_1, \ldots)$ is described by the set $\{ \mu(q_i) \mid i \in
\mathbb{N} \}$. Each element of this set represents the current state
of one or several automata among the ${\cal A}_{n}$. This means that
the number of these automata that are in this current state is not
counted. We will establish that this abstraction is precise and leads
to a correct construction.

During a run of ${\cal A}'$, each transition with a label other than
$\varepsilon$ must correspond to a transition reading the same symbol
in every automaton ${\cal A}_n$, which in turn can be mapped to a
transition of $\cal A$. In the automaton ${\cal A}_n$ for which $n$ is
equal to the number of symbols already read during the run, this
transition of $\cal A$ is necessarily associated to $x_i$, by definition
of ${\cal A}_n$. It follows that every transition of ${\cal A}'$ with
a non-empty label is characterized by a set of transitions of $\cal
A$, among which one of them is associated to $x_i$.

We are now ready to describe formally the construction of ${\cal A}'$,
leaving for the next section the problem of determining which of its
runs should be accepting or not: From the automaton
${\cal A} = (\Sigma, Q, \Delta, Q_0, F)$, we construct
${\cal A}' = (\Sigma',
Q', \Delta', Q'_0, F')$ such that
\begin{itemize}
\item
$\Sigma' = \Pi_{\neq i}(\Sigma)$.
\item
$Q' = 2^Q \setminus \{ \emptyset \}$.
\item
$\Delta'$ contains
\begin{itemize}
\item
the transitions $(q'_1, (a'_1, \ldots, a'_{k+\ell-1}), q'_2)$
for which there exists a set $T \subseteq \Delta$ that satisfies the
following conditions:
\begin{itemize}
\item
$q'_1 = \{ q_1 \mid (q_1, (a_1, \ldots, a_{k+\ell}), q_2) \in T\}$.
\item
$q'_2= \{ q_2 \mid (q_1, (a_1, \ldots, a_{k+\ell}), q_2) \in T\}$.
\item
For all $(q_1, (a_1, \ldots, a_{k+\ell}), q_2) \in T$, one has
$a'_j = a_j$ for all $j \in [1, i-1]$, and
$a'_j = a_{j+1}$ for all $j \in [i, k+\ell-1]$.
\item
There exists exactly one $(q_1, (a_1, \ldots, a_{k+\ell}), q_2) \in T$
such that $a_i = 1$.
\end{itemize}
\item
the transitions $(q'_1, \varepsilon, q'_2)$ for which there exists
a transition $(q_1, \varepsilon, q_2) \in \Delta$
such that
\begin{itemize}
\item $q_1 \in q'_1$.
\item $q'_2 = q'_1 \cup \{ q_2 \}$ or
$q'_2 = (q'_1 \setminus \{ q_1 \}) \cup \{ q_2 \}$.
\end{itemize}
\end{itemize}
\item
$Q'_0 = 2^{Q_0} \setminus \{ \emptyset \}$.
\item
$F' = Q'$ for now. The problem of characterizing more finely the accepting
runs will be addressed in the next section.
\end{itemize}

The rule for the transitions $(q'_1, (a'_1, \ldots, a'_{k+\ell-1}),
q'_2)$ ensures that for each $q_1 \in q'_1$, each automaton ${\cal A}_n$
that is simulated by ${\cal A}'$ has the choice of following any
possible transition originating from $q_1$ that has a label consistent
with $(a'_1, \ldots, a'_{k+\ell-1})$. One such automaton must
nevertheless follow a transition associated to the quantified variable $x_i$.
The rule for the transitions $(q'_1, \varepsilon, q'_2)$ expresses that
one automaton ${\cal A}_n$, or any number of identical copies of this
automaton, must follow a transition labeled by
$\varepsilon$, while the other automata stay in their current state.

As an example, applying this construction to the automaton in
Figure~\ref{fig:automata-formula-exists}, as a first step of the
computation of a representation of $\forall x_1 \exists x_2 (x_1 <
x_2 \wedge X_1(x_2))$, yields the automaton given in
Figure~\ref{fig:example-universal-1}. For the sake of clarity,
unreachable states and states from which the accepted language
is empty are not depicted.

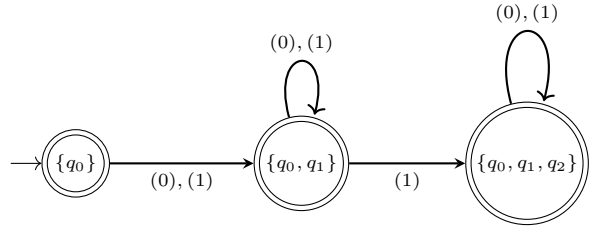
\begin{figure}
\centering
\vspace*{-3ex}
\begin{tikzpicture}
  \node (q0) [state, accepting, initial, initial text = {}] at (0,3)
      {\scriptsize $\{q_0\}$};
  \node (q1) [state, accepting] at (3,3) {\scriptsize $\{q_0, q_1\}$};
  \node (q2) [state, accepting] at (6,3) {\scriptsize $\{q_0, q_1, q_2\}$};

  \path [-stealth, thick]
     (q1) edge [loop above] node
         {\scriptsize $(0), (1)$} ()
     (q2) edge [loop above] node
         {\scriptsize $(0), (1)$} ()
     (q0) edge [below] node {\scriptsize $(0), (1)$} (q1)
     (q1) edge [below] node {\scriptsize $(1)$} (q2);
\end{tikzpicture}
\vspace*{-1ex}
\caption{First step of construction for $\forall x_1 \exists x_2
(x_1 < x_2 \wedge X_1(x_2))$.}
\label{fig:example-universal-1}
\vspace*{-5ex}
\end{figure}

\subsection{A Criterion for Accepting Runs}
\label{sec:criterion}

The automaton ${\cal A}'$ defined in the previous section simulates an
infinite combination of automata ${\cal A}_n$, for all $n \in
\mathbb{N}$. By construction, every accepting run of
this infinite combination corresponds to a run of ${\cal A}'$.

The reciprocal property is not true, in the sense that there may exist
a run of ${\cal A}'$ that does not match an accepting run of the
infinite combination of automata ${\cal A}_n$. Consider for instance a
run of the automaton in Figure~\ref{fig:example-universal-1} that ends
up cycling in the state $\{ q_0, q_1, q_2 \}$, reading $0^{\omega}$
from that state. Recall that for this example, the automaton $\cal A$
that undergoes the universal quantification operation is the one given
in Figure~\ref{fig:automata-formula-exists}.  The run that we have
considered can be followed in ${\cal A}'$, but cannot be accepting in
every ${\cal A}_n$. Indeed, in this example, the transition of ${\cal
  A}_n$ reading the $(n+1)$-th symbol of the run corresponds, by
definition of this automaton, to the transition of ${\cal A}$ that is
associated to the quantified variable $x_1$.  By the structure of
$\cal A$, this transition is necessarily followed later in any
accepting run by one that reads the symbol $1$, which implies that no
word of the form $u \cdot 0^{\omega}$, with $u \in \{ 0,1 \}^*$, can
be accepted by a run of ${\cal A}_n$ such that $n \geq |u|$. This
represents the fact that the words accepted by all ${\cal A}_n$
correspond to the encodings of predicates that are true infinitely
often.

One thus needs a criterion for characterizing the runs of ${\cal A}'$
that correspond to combinations of accepting runs in all automata
${\cal A}_n$.

It is known~\cite{McNaughton66} that two $\omega$-regular languages
over the alphabet $\Sigma$ are equal iff they share the same set of
\textit{ultimately periodic} words, i.e., words of the form $u \cdot
v^{\omega}$ with $u \in \Sigma^*$ and $v \in \Sigma^+$. It follows
that it is sufficient to characterize the accepting runs of ${\cal
  A}'$ that read ultimately periodic words. The automaton ${\cal A}'$
accepts a word $u \cdot v^{\omega}$ iff every ${\cal A}_n$, with $n
\in \mathbb{N}$, admits an accepting run that reads this word. Note
that such a run also matches a run of $\cal A$, and that this run of
$\cal A$ always ends up following a cycle from an accepting state to
itself.

Our solution takes the following form. For each state $q$ of $\cal A$,
we define a language $U_q \subseteq \Sigma^+$ of non-empty words $u$ such that
$\cal A$ accepts $u^{\omega}$ from $q$, after dismissing the input tape
associated to the quantified variable $x_i$. The alphabet $\Sigma$ is
thus equal to $\{0, 1\}^{k + \ell - 1}$.
Remember that each state $q'$ of ${\cal A}'$ is defined as a subset of
states of $\cal A$, corresponding to the current states in the
combination of copies of $\cal A$ that are jointly simulated by ${\cal
  A}'$.  In order for the word $u^{\omega}$ to be accepted by ${\cal
  A}'$ from $q'$, it should therefore be accepted by $\cal A$ from
each state $q \in q'$, i.e., $u $ must belong to all the languages
$U_q$ such that $q \in q'$.

It must also be possible to read $u^{\omega}$ from the state $q'$ of
${\cal A}'$. We impose a stronger condition, by requiring that there
exists a cycle from $q'$ to itself labeled by $u$. This condition
leads to a correct acceptance criterion.

In summary, the language $U'_{q'} =  L({\cal A}', q', q') \cap \bigcap_{q \in q'}
U_q$ characterizes the words $u$ such that
$u^{\omega}$ must be accepted from the state $q'$ of ${\cal A}'$.
 Note that for
this property to hold, it is not necessary for the language $U_q$ to
contain all words $u$ such that $u^{\omega} \in \Pi_{\neq i}(L({\cal A}, q))$, but
only some number of copies $u^p$, where $p > 0$ is bounded, of each
such $u$. In other words, the finite words $u$ whose infinite
repetition is accepted from $q$ do not have to be the shortest
possible ones.

Once the language $U'_{q'}$ has been obtained, we build a
\textit{widget}, in the form of an infinite-word automaton accepting
$(U'_{q'})^{\omega}$, along the state $q'$ of ${\cal A}'$, and add a
transition labeled by $\varepsilon$ from $q'$ to the initial state of
this widget. This ensures that every path that ends up in $q'$ can be
suitably extended into an accepting run.  Such a widget does not have
to be constructed for every state $q'$ of ${\cal A}'$: Since the goal
is to accept from $q'$ words of the form $u^{\omega}$, we can require
that at least one state $q \in q'$ is accepting in $\cal A$. We then
only build widgets for the states $q'$ that satisfy this requirement.

\subsection{Computation Steps}
\label{sec:implementation}

The procedure for modifying ${\cal A}'$ in order to make it accept the
runs that match those of the infinite combination of automata ${\cal
  A}_n$, outlined in the previous section, can be carried out by
representing the regular languages $U_q$ and $U'_{q'}$ by finite-state
automata.  The construction proceeds as follows:

\begin{enumerate}
\item
For each state $q \in Q$ of $\cal A$, build a finite-word automaton
${\cal A}_{q}$ that accepts all the non-empty words $u$ for which
there exists a path $q \stackrel{v}{\rightarrow} q$ of $\cal A$ that
visits at least one accepting state $q_F \in F$, such that $u =
\Pi_{\neq i}(v)$.
This automaton can be constructed in a similar way as one accepting
$\Pi_{\neq i}(L({\cal A}, q, q))$ (cf. Sections~\ref{sec:automata}
and~\ref{sec:operations}), keeping one additional bit of information
in its states for determining whether an accepting state has already
been visited or not.

\item For each pair of states $q_1, q_2 \in Q$ of $\cal A$, build a finite-word
  automaton ${\cal A}_{q_1, q_2}$ accepting the language $\Pi_{\neq i}(L({\cal A},
  q_1, q_2))$ (cf. Sections~\ref{sec:automata}
and~\ref{sec:operations}).
\item
For each state $q\in Q$ of $\cal A$, build an automaton
${\cal A}_{U_{q}} = \bigcup_{r \in Q} \left( {\cal A}_{q,r} \cap {\cal
  A}_{r}\right)$ accepting the finite-word language $U_{q}$.

\item
For each state $q'$ of ${\cal A}'$ such that $q' \cap F \neq
\emptyset$, where $F$ is the set of accepting states of $\cal A$,
build a finite-word automaton ${\cal A}'_{U'_{q'}} = {\cal A}'_{q'}
\cap\, \bigcap_{q \in q'} {\cal A}_{U_q}$ accepting $U'_{q'}$, where
${\cal A}'_{q'}$ is an automaton accepting $L({\cal A}', q', q')$
(cf. Section~\ref{sec:automata}).

\item
Then, turn each automaton ${\cal A}'_{U'_{q'}}$ into an
infinite-word automaton ${\cal A}'_{(U'_{q'})^{\omega}}$ accepting
$(U'_{q'})^{\omega}$:
\begin{enumerate}
\item
Create a new state $q'_{\mathit{repeat}}$.
\item
Add a transition $(q'_{\mathit{repeat}}, \varepsilon, q_0)$ for each
initial state $q_0$, and a transition $(q_F, \varepsilon, q'_{\mathit{repeat}})$
for each accepting state $q_F$, of ${\cal A}'_{U'_{q'}}$.
\item
Make $q'_{\mathit{repeat}}$ the only initial and accepting state of
${\cal A}'_{(U'_{q'})^{\omega}}$.
\end{enumerate}
\item
For each state $q'$ of ${\cal A}'$ considered at Step~4, add the
widget ${\cal A}'_{U'_{q'}}$ alongside $q'$, by incorporating its sets
of states and transitions into those of ${\cal A}'$, and adding a
transition $(q', \varepsilon, q'_{\mathit repeat})$.  In the resulting
automaton, mark as the only accepting states the states
$q'_{\mathit{repeat}}$ of all widgets.
\end{enumerate}
We call ${\cal A}''$ the automaton constructed by this procedure.
This automaton accepts the language $L_{\forall x_i \varphi}$.

\subsection{Illustration}

We illustrate the construction described in Section~\ref{sec:implementation}
on the automaton ${\cal A}'$ in Figure~\ref{fig:example-universal-1},
obtained after the first step of the universal quantification
procedure applied to the automaton ${\cal A}$ in
Figure~\ref{fig:automata-formula-exists}. We assume that the reader is
familiar with the notation of regular and $\omega$-regular languages
by regular expressions.

We obtain automata accepting the following languages at each step of
the procedure:

\begin{enumerate}
\item
$L({\cal A}_{q_0}) = L({\cal A}_{q_1}) = \emptyset$ and
$L({\cal A}_{q_2}) = (0 + 1)^+$.
\item
$L({\cal A}_{q_0, q_0}) = L({\cal A}_{q_1, q_1}) =
 L({\cal A}_{q_2, q_2}) = (0 + 1)^*$,
$L({\cal A}_{q_0, q_1}) = (0 + 1)^+$,
$L({\cal A}_{q_0, q_2}) = (0 + 1)^+ \,1\, (0 + 1)^*$,
$L({\cal A}_{q_1, q_2}) = (0 + 1)^* \,1\, (0 + 1)^*$,
and
$L({\cal A}_{q_j q_{j'}}) = \emptyset$ for all other pairs $(q_j,q_{j'})$ of states.
\item
$L({\cal A}_{U_{q_0}}) = (0 + 1)^+ \,1\, (0 + 1)^*$,
$L({\cal A}_{U_{q_1}}) = (0 + 1)^* \,1\, (0 + 1)^*$, and
$L({\cal A}_{U_{q_2}}) = (0 + 1)^+$.
\item
$L\left({\cal A}'_{U'_{\{q_1,
    q_2\}}}\right) = \emptyset$ and
$L\left({\cal A}'_{U'_{\{q_0, q_1, q_2\}}}\right) =
(0 + 1)^+ \,1\, (0 + 1)^*$.
\item
The widget ${\cal A}'_{U'_{\{q_0, q_1, q_2\}}}$ is given in Figure~\ref{fig:widget}.
\item
The resulting automaton ${\cal A}''$ is shown in
Figure~\ref{fig:example-universal-2}. For the sake of clarity,
the states from which the accepted language is empty
have been removed.
\end{enumerate}

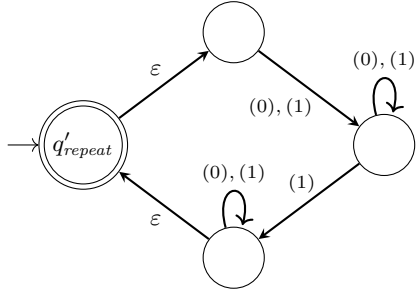
\begin{figure}
\centering
\begin{tikzpicture}

  \node (q0) [state] at (2,4.5) {};
  \node (q1) [state] at (4,3) {};
  \node (q2) [state] at (2,1.5) {};
  \node (q3) [state, initial, initial text = {}, accepting] at (0,3)
      {$q'_{\mathit{repeat}}$};

  \path [-stealth, thick]
      (q1) edge [loop above] node {\scriptsize $(0), (1)$} ()
      (q2) edge [loop above] node {\scriptsize $(0), (1)$} ()
      (q0) edge [below] node {\scriptsize $(0), (1)$~~~~~~~~~} (q1)
      (q1) edge [above] node {\scriptsize $(1)$~~~} (q2)
      (q2) edge [below] node { $\varepsilon$~~ } (q3)
      (q3) edge [above] node { $\varepsilon$~~ } (q0);
\end{tikzpicture}
\caption{Widget for the state $\{ q_0, q_1, q_2 \}$.}
\label{fig:widget}
\end{figure}

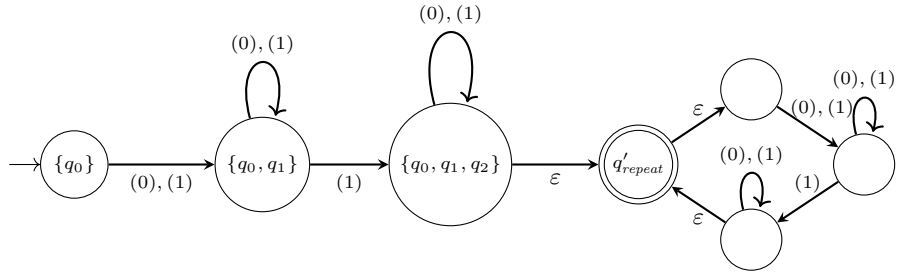
\begin{figure}
\centering
\begin{tikzpicture}
  \node (q0) [state, initial, initial text = {}] at (0,3)
      {\scriptsize $\{q_0\}$};
  \node (q1) [state] at (2.5,3) {\scriptsize$\{q_0, q_1\}$};
  \node (q2) [state] at (5,3) {\scriptsize$\{q_0, q_1, q_2\}$};

  \node (q5) [state] at (9,4) {};
  \node (q6) [state] at (10.5,3) {};
  \node (q7) [state] at (9,2) {};
  \node (q8) [state, accepting] at (7.5,3)
      {\scriptsize $q'_{\mathit{repeat}}$};

  \path [-stealth, thick]
     (q1) edge [loop above] node
         {\scriptsize $(0), (1)$} ()
     (q2) edge [loop above] node
         {\scriptsize $(0), (1)$} ()
     (q0) edge [below] node {\scriptsize $(0), (1)$} (q1)
     (q1) edge [below] node {\scriptsize $(1)$} (q2)

     (q6) edge [loop above] node {\scriptsize $(0), (1)$} ()
     (q7) edge [loop above] node {\scriptsize $(0), (1)$} ()
     (q5) edge [above] node {\scriptsize ~~~~$(0), (1)$} (q6)
     (q6) edge [above] node {\scriptsize $(1)$} (q7)
     (q7) edge [below] node {$\varepsilon$} (q8)
     (q8) edge [above] node {$\varepsilon$ } (q5)
     (q2) edge [below] node{$\varepsilon$ } (q8);
\end{tikzpicture}
\vspace*{-1ex}
\caption{Automaton recognizing the set of models of $\forall x_1 \exists x_2
(x_1 < x_2 \wedge X_1(x_2))$.}
\label{fig:example-universal-2}
\end{figure}

\subsection{Proof of Correctness}

Let us show that the automaton ${\cal A}''$ constructed according to
the procedure developed in Section~\ref{sec:implementation} simulates
exactly the infinite intersection of the automata ${\cal A}_n$, for
all $n \geq 0$, as defined in Section~\ref{sec:construction}.  We
first establish that every word accepted by ${\cal A}''$ is also
accepted by ${\cal A}_n$ for each $n \geq 0$, after dismissing the input
tape associated to the quantified variable. Note that, since we know
that applying a first-order universal quantifier to an automaton
recognizing a set of models preserves $\omega$-regularity
(cf. Section~\ref{sec:operations}), and two $\omega$-regular languages
are equal iff they share the same ultimately periodic
words~\cite{McNaughton66}, it is sufficient to consider ultimately
periodic words.

\begin{theorem}
For every ultimately periodic word $w'' \in L({\cal A}'')$ and $n \geq 0$,
there exists an accepting run $\rho$ of $\cal A$ such that
\begin{itemize}
\item
the word $w$ read by $\rho$ satisfies $\Pi_{\neq i}(w) = w''$, and
\item
the $(n+1)$-th transition with a non-empty label followed by $\rho$
is associated to the quantified variable $x_i$. (In other words, such
a run $\rho$ also exists for the automaton ${\cal A}_n$.)
\end{itemize}
\end{theorem}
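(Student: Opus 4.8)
The plan is to exploit the very rigid shape of accepting runs of $\mathcal{A}''$ — the only accepting states are the states $q'_{\mathit{repeat}}$ of widgets, and a widget is entered only through the edge $(q',\varepsilon,q'_{\mathit{repeat}})$ and never left — so that I can put the run in normal form, decompose it, and then rebuild a suitable run of $\mathcal{A}$ (equivalently of $\mathcal{A}_n$) piece by piece along that decomposition. Since $\mathcal{A}''$ is a Büchi automaton and $w''$ is ultimately periodic, I would first fix an accepting run of $\mathcal{A}''$ on $w''$ of lasso shape and split it into: (i) a finite path of the ``core'' automaton built in Section~\ref{sec:construction} from $P_0\in Q'_0$ to a set $P$ with $P\cap F\neq\emptyset$, reading a prefix $w''_1$; (ii) the $\varepsilon$-step into the widget of $P$; (iii) an infinite accepting run inside that widget. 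Reading off the block boundaries visited infinitely often inside the widget gives $w''=w''_1\,w''_2$ with $w''_2=u_1u_2u_3\cdots$ and every $u_j\in U'_P=L(\mathcal{A}',P,P)\cap\bigcap_{q\in P}U_q$; a pigeonhole argument on those block boundaries, using that $w''$ is ultimately periodic, further lets me write $w''=\alpha\,v^{\omega}$ with $\alpha=w''_1u_1\cdots u_{a-1}$ and $v=u_a\cdots u_{b-1}\in(U'_P)^{+}$. I would also use the remark of Section~\ref{sec:decision} that $\mathcal{A}$ may be assumed to carry no transition associated to $x_i$ on any cycle.

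The core of the proof would be a \emph{finite simulation lemma} about the core automaton, proved by induction on path length: for every path $P_1\stackrel{w'''}{\rightarrow}P_2$ of the core with $m$ non-$\varepsilon$ transitions, every $q_2\in P_2$, and every $p\in\{1,\dots,m\}\cup\{\bot\}$, there are $q_1\in P_1$ and a path $q_1\stackrel{v'}{\rightarrow}q_2$ of $\mathcal{A}$ with $\Pi_{\neq i}(v')=w'''$ whose $p$-th non-$\varepsilon$ transition is associated to $x_i$ and no other is (when $p\neq\bot$), or which uses no transition associated to $x_i$ at all (when $p=\bot$). The inductive step peels off the last core transition and invokes the two defining clauses of $\Delta'$: for a non-$\varepsilon$ core transition with associated set $T\subseteq\Delta$, each state of $P_1$ is the source of a transition of $T$ and exactly one transition of $T$ is associated to $x_i$, which guides the routing of the tracked copy; for an $\varepsilon$ core transition one uses the underlying $\varepsilon$ transition of $\Delta$ together with the two allowed updates of the state set. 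The main obstacle will be precisely here: one must choose the concrete transitions so that, at step $p$, the tracked copy sits exactly at the source of the $x_i$ transition of the current $T$ — this is the statement that the set-based abstraction of Section~\ref{sec:construction} is precise — which is why the induction has to be run along the whole path, keeping $q_1$ and the intermediate concrete states free, rather than greedily from left to right.

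To assemble $\rho$, I would fix $n$ and distinguish two cases. If $n+1\le|\alpha|$, apply the lemma to the core path reading $w''_1$, extended by the core cycles $P\stackrel{u_j}{\rightarrow}P$ for $j<a$ (which exist since $u_j\in L(\mathcal{A}',P,P)$), with $p=n+1$: this yields a finite path of $\mathcal{A}$ from $Q_0$ to some $q^{\ast}\in P$ projecting to $\alpha$ whose only $x_i$ transition is its $(n+1)$-th non-$\varepsilon$ one. If $n+1>|\alpha|$, do the same with $p=\bot$, continue along the core cycles $P\stackrel{v}{\rightarrow}P$ through the repetitions of $v$ preceding the one that contains position $n+1$, and place the $x_i$ transition inside that repetition with a further use of the lemma. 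In both cases we end at some $q^{\ast}\in P$ at a $v$-block boundary, having consumed the $x_i$ transition, and it remains to extend the partial path into an accepting run of $\mathcal{A}$ reading $v^{\omega}$ with no further transition associated to $x_i$.

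To close the Büchi condition, repeated use of the lemma (with $p=\bot$, legitimate since $x_i$ is spent) reads each copy of $v$ along a core cycle of $P$, returning each time to some state of $P$; crucially, each factor of $v$ lies in $U_q$ for \emph{every} $q\in P$, so the argument can restart from whichever state of $P$ such a cycle reaches. To obtain infinitely many visits to an accepting state of $\mathcal{A}$, I would replace, at one well-chosen block boundary $q\in P$, this routing by the witness for $v\in U_q$: a path of $\mathcal{A}$ from $q$ to some state $r$ together with a cycle of $\mathcal{A}$ from $r$ to $r$ iterated forever; the cycle visits an accepting state of $\mathcal{A}$ and, being a cycle, carries no transition associated to $x_i$, while the word read from $q^{\ast}$ still projects exactly to $v^{\omega}$. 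Concatenating with the finite path of the previous step gives the required accepting run $\rho$. Besides the faithfulness of the simulation lemma, the points that will need care are that this substitution is arranged to keep the projected word equal to $v^{\omega}$ (this is where it matters that $U'_P$ demands membership in $U_q$ for all $q\in P$ simultaneously) and that the witness path from $q$ to $r$ can be taken free of transitions associated to $x_i$ once $\mathcal{A}$ is put in the appropriate form.
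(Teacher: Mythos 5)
Your overall architecture is the paper's: normalize the accepting run of $\mathcal{A}''$ around a single $q'_{\mathit{repeat}}$, realize the finite prefix as a path of $\mathcal{A}$ through the subset construction, and close the B\"uchi condition with the $U_q$ witnesses, exploiting that $U'_{q'}$ demands membership in $U_q$ for \emph{every} $q\in q'$. The genuine problem is your finite simulation lemma. As stated --- for \emph{every} $q_2\in P_2$ and every position $p$ there is a path of $\mathcal{A}$ into $q_2$ whose $p$-th non-$\varepsilon$ transition is the $x_i$-transition --- it is false: the set $T$ defining a core transition only guarantees that each state of the target set is the target of \emph{some} transition of $T$, not of its unique $x_i$-transition, so a backward trace from a fixed $q_2$ need never pass through the target of the $x_i$-transition of $T_p$ (take a $T_p$ whose unique $x_i$-transition is $(s,\cdot,t)$ but which also contains $(s,\cdot,u)$ with $u\neq t$, followed by a $T_{p+1}$ that routes only $u$ onward to your chosen $q_2$). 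Keeping ``$q_1$ and the intermediate states free,'' as you propose, does not repair this, because you have fixed $q_2$; and the symmetric variant with $q_1$ fixed is what your chaining in the case $n+1>|\alpha|$ actually needs (after the $p=\bot$ repetitions you sit at an uncontrolled state of $P$ and then want to place $x_i$ in the next $v$-block starting from that very state), and it fails for the same reason. The only statement the set abstraction supports is the one with \emph{both} endpoints existentially quantified: anchor the path at the unique $x_i$-transition of $T_p$ and grow it backward to some $q_1\in P_1$ and forward to some $q_2\in P_2$.

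That weaker lemma suffices, but it is incompatible with segment-by-segment chaining, since an existential-endpoints segment cannot be stitched onto a previously committed one. The paper's way around this is to make a \emph{single} application of the simulation property cover everything up to and beyond position $n+1$: because $U'_{q'}\subseteq L(\mathcal{A}',q',q')$, the periodic blocks read inside the widget can be pulled back as cycles $q'\stackrel{u_3}{\rightarrow}q'$ of $\mathcal{A}'$ itself and pumped $p$ times with $|u_1\,u_2\,(u_3)^p|>n$, so that one invocation on the whole finite path from $q'_0$ to $q'$ yields a path of $\mathcal{A}$ ending at \emph{some} $q_i\in q'$ with the $x_i$-transition at position $n+1$; the inclusion $U'_{q'}\subseteq U_{q_i}$ for whichever $q_i$ comes out is then exactly what allows the accepting tail reading $(u_3)^{\omega}$ (after projection) to be appended from there. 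Your proof goes through if you replace your lemma by its two-sided existential version and apply it once to such a pumped prefix instead of chaining; as written, the key step does not hold.
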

\begin{proof}
The only accepting states of ${\cal A}''$ are the states $q'_{\mathit{repeat}}$,
and those are not reachable from each other. Thus, for every run of ${\cal A}''$
that accepts $w''$, there exists a single state $q'_{\mathit{repeat}}$ that this
run visits infinitely often. It follows that there exists a run $\rho''$
of ${\cal A}''$ accepting $w''$ that is of the form
\[
\rho'':\, q'_0 \stackrel{u_1}{\rightarrow}
q' \stackrel{\varepsilon}{\rightarrow}
q'_{\mathit{repeat}} \stackrel{u_2}{\rightarrow}
q'_{\mathit{repeat}} \stackrel{u_3}{\rightarrow}
q'_{\mathit{repeat}} \stackrel{u_3}{\rightarrow}
q'_{\mathit{repeat}} \stackrel{u_3}{\rightarrow}
\cdots,
\]
where $q'_0$ is an initial state of ${\cal A}''$, and $w'' = u_1 \, u_2
\, (u_3)^{\omega}$ with $|u_2| > 0$ and $|u_3| > 0$.

By construction of the widget associated to $q'$, if there exists a
non-empty word $u$ such that ${\cal A}''$ admits the path
$q'_{\mathit{repeat}} \stackrel{u}{\rightarrow} q'_{\mathit{repeat}}$,
then the path $q' \stackrel{u}{\rightarrow} q'$ exists as well in
${\cal A}'$, and therefore in ${\cal A}''$. (Securing this property
was the motivation behind the restriction discussed in the definition
of $U'_{q'}$ in Section~\ref{sec:criterion}.) Therefore, ${\cal A}''$
admits the paths $q' \stackrel{u_2}{\rightarrow} q'$ and $q'
\stackrel{u_3}{\rightarrow} q'$, and there exists a run $\rho''_1$ of
${\cal A}''$ accepting $w$ that is of the form
\[
\rho_1'':\, q'_0 \stackrel{u_1}{\rightarrow}
q' \stackrel{u_2}{\rightarrow}
q'\stackrel{u_3}{\rightarrow}
q' \stackrel{\varepsilon}{\rightarrow}
q'_{\mathit{repeat}} \stackrel{u_3}{\rightarrow}
q'_{\mathit{repeat}} \stackrel{u_3}{\rightarrow}
q'_{\mathit{repeat}} \stackrel{u_3}{\rightarrow}
\cdots.
\]

For every $p > 0$, the path $q' \stackrel{u_3}{\rightarrow} q'$ that appears
in this run can be repeated $p$ times, which results in a path
\[
\rho_p'':\, q'_0 \stackrel{u_1}{\rightarrow}
q' \stackrel{u_2}{\rightarrow}
q'\stackrel{(u_3)^p}{\rightarrow}
q' \stackrel{u_3}{\rightarrow}
q'_{\mathit{repeat}} \stackrel{u_3}{\rightarrow}
q'_{\mathit{repeat}} \stackrel{u_3}{\rightarrow}
q'_{\mathit{repeat}} \stackrel{u_3}{\rightarrow}
\cdots
\]
that accepts $w''$. We then choose $p$ such that $|u_1 \, u_2 \,
(u_3)^p| > n$, in order to ensure that the $(n + 1)$-th
symbol read by $\rho_p''$ corresponds to a transition located in
${\cal A}'$ and not in any widget.

By construction of ${\cal A}'$ (cf. Section~\ref{sec:construction}),
there exists a path $\pi$ of ${\cal A}$ of the form $\pi: q_0
\stackrel{u}{\rightarrow} q_i$, where $q_0$ is an initial state, $q_i
\in q'$, $\Pi_{\neq i}(u) = u_1 \cdot u_2 \cdot (u_3)^p$, and the
$(n+1)$-th symbol read by $\pi$ is associated to the
quantified variable $x_i$.  Furthermore, since the widget associated
to $q'$ admits the path $q'_{\mathit{repeat}}
\stackrel{u_3}{\rightarrow} q'_{\mathit{repeat}}$, we have $u_3 \in
(U_{q_i})^*$, which implies that ${\cal A}$ accepts from its state
$q_i$ a word $v$ such that $\Pi_{\neq i}(v) = (u_3)^{\omega}$. By
appending a run of $\cal A$ accepting $v$ from $q_i$ to the path
$\pi$, we obtain a suitable run $\rho$.
\end{proof}

We now show that every word accepted by the automata ${\cal A}_n$,
for all $n \geq 0$, is also accepted by ${\cal A}''$.

\begin{theorem}
If an ultimately periodic word $w''$ is such that for every $n \geq 0$,
there exists an accepting run $\rho_n$ of $\cal A$ such that
\begin{itemize}
\item
the word $w_n$ read by $\rho_n$ satisfies $\Pi_{\neq i}(w_n) = w''$, and
\item
the $(n+1)$-th transition with a non-empty label followed by $\rho_n$
is associated to the quantified variable $x_i$,
\end{itemize}
then $w'' \in L({\cal A}'')$.
\end{theorem}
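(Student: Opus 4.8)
The plan is to construct an explicit accepting run of $\mathcal{A}''$ on $w''$ by collating information from the runs $\rho_n$. Write $w'' = u \cdot v^{\omega}$ with $u \in \Sigma^*$, $v \in \Sigma^+$. The first observation is that each $\rho_n$, being an accepting run of $\mathcal{A}$, eventually settles into a cycle through an accepting state of $\mathcal{A}$; moreover, since $\rho_n$'s $(n+1)$-th non-empty transition is associated to $x_i$ and $x_i$-transitions occur at most once, for $n$ large the prefix of $\rho_n$ up to position $n$ is forced to match a run of $\mathcal{A}$ that has not yet read its $x_i$-symbol, and what happens afterwards is "the same" modulo where the $x_i$-transition sits. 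I would first argue, by a pigeonhole/König-style argument over the finitely many states of $\mathcal{A}$, that there is a single finite word $u_1$ (a prefix of $w''$) and a single state $q'$ of $\mathcal{A}'$, reachable from an initial state of $\mathcal{A}'$ by reading $u_1$, such that $q' \cap F \neq \emptyset$ and such that, writing $w'' = u_1 \cdot z$ with $z$ an ultimately periodic word that is a ``tail'' of $w''$, every $\mathcal{A}_n$ (for $n \geq |u_1|$, say) admits an accepting run whose state after reading the prefix corresponding to $u_1$ lies in $q'$.

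The second step is to show that the tail $z$, suitably chosen of the form $t^{\omega}$ for a single period word $t$ (obtained by taking a common period of $v$ and aligning it with the eventual cycles of the relevant $\rho_n$), belongs to $(U'_{q'})^{\omega}$. For this I need two things. First, that $\mathcal{A}'$ admits a cycle $q' \xrightarrow{t} q'$: this follows from the construction of $\Delta'$, since the set $T$ of $\mathcal{A}$-transitions used along one period by the collection of runs $\{\rho_n\}$ — one of which, the run indexed by the appropriate $n$, uses an $x_i$-transition at the matching position — is exactly a witness set for a transition of $\mathcal{A}'$; iterating over the period gives the cycle. Second, that $t \in \bigcap_{q \in q'} U_q$, i.e. for each $q \in q'$ the automaton $\mathcal{A}$ accepts $t^{\omega}$ from $q$ after projecting away $x_i$: this holds because $q$ is the state reached in some $\rho_n$ after $u_1$, and the remainder of $\rho_n$ is an accepting run of $\mathcal{A}$ from $q$ reading a word projecting to $z = t^{\omega}$; since $\mathcal{A}$'s accepting runs eventually cycle through $F$, this remainder witnesses $t^p \in U_q$ for some $p > 0$ (the bounded-copies remark at the end of Section~\ref{sec:criterion} is exactly what lets us get away with $t^p$ rather than $t$, and by replacing $t$ with a common power we may assume $p = 1$). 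Hence $t \in U'_{q'} = L(\mathcal{A}', q', q') \cap \bigcap_{q \in q'} U_q$, so the widget attached at $q'$ accepts $t^{\omega}$, and following $u_1$ into $q'$, then the $\varepsilon$-transition into $q'_{\mathit{repeat}}$, then cycling through the widget, gives an accepting run of $\mathcal{A}''$ on $u_1 \cdot t^{\omega} = w''$.

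The main obstacle I anticipate is the simultaneous alignment argument: the runs $\rho_n$ are given independently for each $n$, and I must extract from this infinite family a single finite prefix $u_1$, a single $\mathcal{A}'$-state $q'$, and a single period $t$ that work for all of them at once — in particular so that the transition set $T$ realizing one period of the $\mathcal{A}'$-cycle genuinely contains, for \emph{each} offset, some $\rho_n$ contributing an $x_i$-labelled transition at that offset. Concretely, the delicate point is that the $(n+1)$-th non-empty transition of $\rho_n$ being the $x_i$-transition means that as $n$ ranges over a residue class modulo $|t|$, the $x_i$-transition of $\rho_n$ lands at a fixed offset within the period, so the periodic behaviour of $\{\rho_n\}_{n}$ collectively covers every offset. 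Making this precise — choosing $u_1$ long enough that beyond it all the relevant $\rho_n$ are already in their eventual cycle, and choosing $|t|$ to be a common multiple of $|v|$ and of all the (finitely many, up to state) cycle lengths involved — is the heart of the proof; once the alignment is set up, the verification that the resulting collection of $\mathcal{A}$-transitions satisfies the four bullet conditions defining $\Delta'$ (in particular the ``exactly one $a_i = 1$'' condition, which needs that no \emph{other} $\rho_n$ in the same residue class contributes a second $x_i$-transition at that offset — true because each $\rho_n$ has exactly one $x_i$-transition and it is pinned to position $n$) is routine bookkeeping.
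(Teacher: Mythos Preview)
Your overall strategy coincides with the paper's: assemble an $\mathcal{A}'$-run on $w''$ whose state at position $j$ is the subset $q'_j = \{q_{n,j} : n \geq 0\}$ of $\mathcal{A}$-states visited by the family $\{\rho_n\}$, locate a periodically recurring $q'$ with $q'\cap F\neq\emptyset$, and show that some power of the period lies in $U'_{q'}$ so the widget accepts the tail. Your second paragraph already contains the right mechanism for $t\in\bigcap_{q\in q'}U_q$: each $q\in q'$ is reached by \emph{some} $\rho_n$, whose continuation from $q$ is an accepting $\mathcal{A}$-run projecting to $t^{\omega}$.

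One formulation in your obstacle discussion would derail you if taken literally. You cannot ``choose $u_1$ long enough that beyond it all the relevant $\rho_n$ are already in their eventual cycle'': for every $n>|u_1|$ the run $\rho_n$ still has its $x_i$-transition ahead at position $n$, and that transition can never lie on a cycle of $\mathcal{A}$. What is eventually periodic is not the individual $\rho_n$ but the \emph{sequence of subsets} $q'_j$ (finitely many possible subsets, ultimately periodic input). Once a recurring $q'$ is fixed, you need only one witnessing $\rho_n$ \emph{per element} of $q'$---finitely many runs---and the common exponent $p$ is an $\mathrm{lcm}$ over that finite collection, exactly as the paper does. Your restriction to $n\geq|u_1|$ in the first paragraph is likewise unnecessary: the subset $q'_j$ already records the states of \emph{all} $\rho_n$, and for $n<j$ the run $\rho_n$ has simply already taken its $x_i$-transition. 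Finally, the ``exactly one $a_i=1$'' worry is a non-issue: at position $j$ only $\rho_j$ contributes an $x_i$-labelled transition, and in any case $T$ is a set you are free to choose.
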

\begin{proof}
For a given value of $n$, the run $\rho_n$ takes the form
\[
\rho_n: q_{n,0} \stackrel{a_0}{\rightarrow}
 q_{n,1} \stackrel{a_1}{\rightarrow}
 q_{n,2} \stackrel{a_2}{\rightarrow}
 ~~\cdots~~
 q_{n,n} \!\!\!\stackrel{\begin{array}{c}\scriptstyle a_n\\[-1ex]
 \scriptstyle\{ x_i \}\end{array}}{\rightarrow}
 \!\!\!q_{n,n+1} \stackrel{a_{n+1}}{\rightarrow} ~~\cdots,
\]
where $q_{n,0}$ is initial, and $w'' = a_0 a_1 a_2 \ldots$ with $|a_j|
= 1$ for all $j \geq 0$. The $(n + 1)$-th transition with a non-empty
label followed by this path is associated to the quantified variable
$x_i$. Since $w''$ is ultimately periodic, $w_n$ is ultimately
periodic as well. It follows that $\rho_n$ contains a non-empty cycle
$q \stackrel{u}{\rightarrow} q$, with $q = q_{n, j_n}$ for some $j_n >
0$, that reads a (non necessarily minimal) period $u$ of the infinite
periodic part of $w_n$, and such that $q$ is an accepting state of
$\cal A$. This path can be repeated at will, hence we can assume
w.l.o.g. that $\rho_n$ is of the form
\[
\rho_n: q_{n,0} \stackrel{a_0}{\rightarrow}
 ~~\cdots~~
 q_{n,n} \!\!\!\stackrel{\begin{array}{c}\scriptstyle a_n\\[-1ex]
 \scriptstyle\{ x_i \}\end{array}}{\rightarrow}
 \!\!\!q_{n,n+1} \stackrel{a_{n+1}}{\rightarrow} ~~\cdots~~
q_{n,j_n} \stackrel{u}{\rightarrow} q_{n,j_n+\delta_n}
\stackrel{u}{\rightarrow} q_{n,j_n+2\delta_n}
\stackrel{u}{\rightarrow} ~~\cdots,
\]
where $j_n \geq n + 1$, $\delta_n = |u| > 0$, and
$q_{n,j_n} = q_{n,j_n+\delta_n} = q_{n,j_n+2\delta_n} = \cdots$ is
an accepting state.

By construction of the automaton ${\cal A}'$
(cf. Section~\ref{sec:construction}), the sets of states $\{ q_{0,0},
q_{1,0}, q_{2,0}, \ldots \}$, $\{ q_{0,1}, q_{1,1}, q_{2,1}, \ldots
\}$, $\{ q_{0,2}, q_{1,2}, q_{2,2}, \ldots \}$ correspond
(respectively) to the successive states $q'_0, q'_1, q'_2, \ldots$ of
a run of ${\cal A}'$ reading $w''$. Since $w''$ is ultimately
periodic, this run necessarily contains a path $q'
\stackrel{u_2}{\rightarrow} q'$ that reads a period $u_2$ of the
infinite periodic part of $w''$, with $|u_2| > 0$, such that $q'
\in \{ q'_0, q'_1, \ldots \}$.  This path can be repeated at will,
hence ${\cal A}'$ admits a run $\rho'$ of the form
\[
\rho': q'_0  \stackrel{u_1}{\rightarrow}
q'  \stackrel{u_2}{\rightarrow}
q'  \stackrel{u_2}{\rightarrow}
q'  \stackrel{u_2}{\rightarrow} \cdots,
\]
such that $w'' = u_1 \, (u_2)^{\omega}$. Let us pick an arbitrary
value of $n$, say $n = 0$ for the sake of simplicity. The length of
$u_1$ can be freely increased in $\rho'$, thus we can assume
w.l.o.g. that $u_1$ is such that $|u_1| = j_0 + k \delta_0$ for some
$k \in \mathbb{N}$, which implies $q_{0,j_0 + k \delta_0} \in q'$,
where $q_{0,j_0 + k \delta_0}$ is an accepting state of ${\cal A}$.

Recall that the run $\rho'$ simulates a set of runs of ${\cal A}$ accepting
$w_n$ such that $\Pi_{\neq i} (w_n) = w''$. Whenever $\rho'$ visits the state
$q'$, each of these runs visits some state $q_i \in q'$.
It follows that for each $q_i \in q'$, there exist finite words
$v_{i}, t_{i}$ and a state $q_j \in q'$ such that ${\cal A}$ admits a
run of the form
\[
q_i \stackrel{v_{i}}{\rightarrow}
q_j \stackrel{t_{i}}{\rightarrow}
q_j \stackrel{t_{i}}{\rightarrow}
q_j \stackrel{t_{i}}{\rightarrow}
\cdots,
\]
with $\Pi_{\neq i}(v_{i}) = (u_2)^{m_i}$ and
$\Pi_{\neq i}(t_{i}) = (u_2)^{p_i}$ for some $m_i, p_i > 0$.

By defining $p =  \mathrm{lcm}( \max_{q_i \in q'} \{ m_i \},
\mathrm{lcm}_{q_i \in q'} \{ p_i \})$,
we now have
that
\begin{itemize}
\item
for every $q_i \in q'$, there exists a state $q_j$ of $\cal A$
such that this automaton admits paths $q_i \stackrel{v}{\rightarrow} q_j$
and $q_j \stackrel{t}{\rightarrow} q_j$, where $v, t$ are such that
$\Pi_{\neq i} (v) = \Pi_{\neq i} (t) = (u_2)^p$,
\item
the automaton ${\cal A}'$ admits a path
$q' \stackrel{(u_2)^p}{\rightarrow} q'$,
\item
the set $q'$ contains at least one accepting state of $\cal A$.
\end{itemize}

For those properties, we deduce $(u_2)^p \in U'_{q'}$. It follows that
$((u_2)^p)^{\omega} = (u_2)^{\omega}$ is accepted by the widget associated
to $q'$, hence that ${\cal A}''$ admits the run
\[
q'_0  \stackrel{u_1}{\rightarrow}
q' \stackrel{\varepsilon}{\rightarrow}
q'_{\mathit{repeat}} \stackrel{(u_2)^p}{\rightarrow}
q'_{\mathit{repeat}} \stackrel{(u_2)^p}{\rightarrow}
q'_{\mathit{repeat}} \stackrel{(u_2)^p}{\rightarrow}
\cdots
\]
that accepts $w''$.
\qed
\end{proof}

\subsection{Complexity}
\label{sec:complexity}

The algorithm introduced in Section~\ref{sec:implementation} for
computing an automaton accepting $L_{\forall x \varphi}$
from an automaton $\cal A$ accepting $L_{\varphi}$ runs in time
$2^{O(|\cal A|)}$, where $|\cal A|$ is the size of $\cal A$. This
complexity is tight thanks to the following result.

\begin{theorem}
There exist a family of formulas $\varphi_n$ for $n \in \mathbb{N}$, and
automata ${\cal A}^{(n)}$ accepting the corresponding $L_{\varphi_n}$, such that
$|{\cal A}^{(n)}| = O(n)$, and the number of states of any automaton accepting
$L_{\forall x \varphi_n}$ is at least $O(2^n)$.
\end{theorem}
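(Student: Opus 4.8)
The plan is to exhibit a family of formulas whose automata are small but for which the universal quantifier genuinely requires tracking an exponential amount of information. Since the construction of Section~\ref{sec:implementation} turns each state of~${\cal A}'$ into a subset of the states of~${\cal A}$, the natural strategy is to design~$\varphi_n$ so that the language $L_{\forall x\, \varphi_n}$ forces any recognizing automaton to distinguish $2^n$ distinct ``residual'' languages. A convenient choice is to take $n$ second-order variables $X_1, \ldots, X_n$ together with a single first-order variable~$x$, and let $\varphi_n$ assert something about the membership pattern of~$x$ in the $X_j$'s relative to a designated position. For instance, one can let $\varphi_n(x, X_1, \ldots, X_n)$ express that the bit read on each tape~$X_j$ at the position marked by~$x$ equals the bit read on tape~$X_j$ at position~$0$ (or some similarly local constraint relating position~$x$ to a fixed reference window of size~$n$). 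Then $\forall x\, \varphi_n$ says that \emph{every} position agrees with the reference pattern, i.e.\ the first $n$ bits of the $X_j$-tapes determine the entire rest of the word. An automaton for $L_{\varphi_n}$ only needs to read the first $n$ symbols, remember them (at the single marked position it re-reads them), and check one equality, so $|{\cal A}^{(n)}| = O(n)$; indeed a deterministic automaton with $O(n)$ states suffices, and one can write it down explicitly as a linear-size gadget.

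The key step is the lower bound. First I would identify, for each of the $2^n$ possible initial bit-patterns $b \in \{0,1\}^n$ of the tuple $(X_1, \ldots, X_n)$, the set $W_b$ of valid suffixes that make $\forall x\, \varphi_n$ true after that prefix has been read. By the design of~$\varphi_n$, $W_b$ is (essentially) the singleton-suffix language forcing every subsequent position to carry pattern~$b$; in particular, $b \ne b'$ implies $W_b \cap W_{b'} = \emptyset$ while each $W_b$ is nonempty. Now I would apply a Myhill--Nerode--style argument for $\omega$-automata: if an automaton~${\cal B}$ accepts $L_{\forall x\, \varphi_n}$, then after reading two prefixes of the full word of length~$n$ that carry different patterns $b \ne b'$, ${\cal B}$ must be in states from which different $\omega$-languages (namely the extensions compatible with $W_b$ versus $W_{b'}$) are accepted, hence in different states. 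This yields at least $2^n$ reachable states, giving the desired $O(2^n)$ bound.

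The main obstacle is making the Myhill--Nerode argument watertight for nondeterministic B\"uchi automata, where ``the state reached after a prefix'' is not well defined: one must instead argue about the \emph{set} of states reachable after each length-$n$ prefix, and show these sets are pairwise incomparable (or at least that no single state can belong to two of them while still being correct), which is where the emptiness/disjointness of the $W_b$'s and the fact that each $W_b$ is itself $\omega$-regular and nontrivial gets used. A clean way around this is to choose $\varphi_n$ so that $L_{\forall x\, \varphi_n}$ restricted to the $2^n$ relevant prefixes behaves like the ``last $n$ bits'' benchmark language classically known to require $2^n$ states even for NFAs/B\"uchi automata, and then invoke that known hardness as a black box; I would, however, prefer to keep the argument self-contained by the disjoint-residual-language reasoning above. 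A secondary point to check is that the prefixes in question are genuinely all reachable and extendable to accepting runs, i.e.\ that each pattern~$b$ does arise in some model of $\forall x\, \varphi_n$ — this follows immediately since the word whose $X_j$-tapes are constantly~$b$ (with the $x$-tape marking, say, position~$0$) is such a model.
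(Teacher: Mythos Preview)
Your lower-bound discussion is more careful than what the paper actually writes, but the construction you propose for the \emph{upper} bound does not work, and that is where the theorem's content really lies.

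With $n$ second-order variables $X_1,\ldots,X_n$, the alphabet of the automata is $\{0,1\}^{n+1}$, so its size already grows as $2^{n+1}$; even an automaton with a constant number of states cannot have size $O(n)$ once transitions are counted. More importantly, your formula $\varphi_n \equiv \bigwedge_{j=1}^n \bigl(X_j(0) \Leftrightarrow X_j(x)\bigr)$ does \emph{not} admit an $O(n)$-state automaton: between position~$0$ and the position marked by~$x$, the automaton must carry the full pattern $(X_1(0),\ldots,X_n(0))\in\{0,1\}^n$, and no amount of nondeterminism lets you check equality of two $n$-bit vectors without remembering one of them. (Nondeterminism helps for the complement ``some $X_j$ differs'', which is why your intuition that a small gadget exists is tempting, but you need the conjunction here.) So $|{\cal A}^{(n)}| = O(n)$ fails for this family, and the theorem as stated is not established.

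The paper avoids both problems by keeping a \emph{single} second-order variable $X_1$ and letting the parameter $n$ appear as a distance rather than as a number of tapes: take $\varphi(x_1,x_2,X_1)\equiv x_2=x_1+n \wedge \bigl(X_1(x_1)\Leftrightarrow X_1(x_2)\bigr)$, existentially quantify $x_2$, and then universally quantify $x_1$. The automaton for $\varphi$ is a chain of length $n$ over the fixed alphabet $\{0,1\}^3$, so genuinely of size $O(n)$; after $\forall x_1$, the language expresses ``$X_1$ is periodic with period $n$'', and your Myhill--Nerode style argument then applies cleanly to get the $2^n$ lower bound (the $2^n$ length-$n$ prefixes of $X_1$ have pairwise disjoint, nonempty continuation languages). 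Your parenthetical ``local constraint relating position~$x$ to a fixed reference window of size~$n$'' is essentially this idea; that, not the $n$-tape construction, is the route that works.
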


\begin{proof}
The automaton given in Figure~\ref{fig:automaton-shift} recognizes the
set of models of
\[
\varphi(x_1, x_2, X_1)\,\equiv\, x_2 = x_1 + n \,\wedge\, X_1(x_1)
\Leftrightarrow X_1(x_2),
\]
where $n \geq 1$ is parameter.

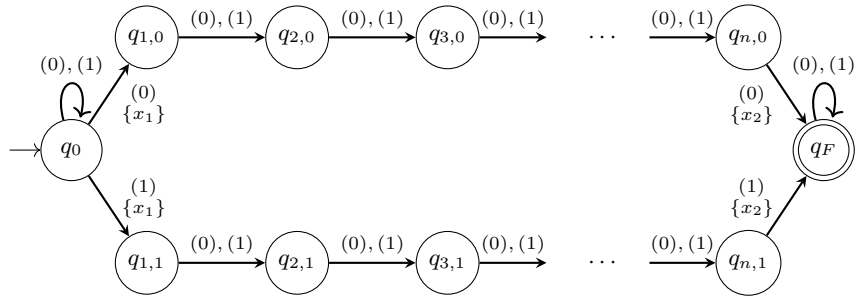
\begin{figure}
\centering

\begin{tikzpicture}
 \node (q0) [state, initial, initial text = {}] at (0,0) {$q_0$};
 \node (q1) [state] at (1,1.5) {$q_{1,0}$};
 \node (q2) [state] at (3,1.5) {$q_{2,0}$};
 \node (q3) [state] at (5,1.5) {$q_{3,0}$};
 \node (q4) at (7,1.5) {~~~~$\cdots$~~~~};
 \node (qn) [state] at (9,1.5) {$q_{n,0}$};
 \node (qf) [state, accepting] at (10,0) {$q_F$};
 \node (q1b) [state] at (1,-1.5) {$q_{1,1}$};
 \node (q2b) [state] at (3,-1.5) {$q_{2,1}$};
 \node (q3b) [state] at (5,-1.5) {$q_{3,1}$};
 \node (q4b) at (7,-1.5) {~~~~$\cdots$~~~~};
 \node (qnb) [state] at (9,-1.5) {$q_{n,1}$};

 \path [-stealth, thick]
     (q0) edge [loop above] node {\scriptsize $(0), (1)$} ()
     (q0) edge [right] node {\scriptsize$\begin{array}{c}\\(0)\\\{ x_1 \}\end{array}$} (q1)
     (q1) edge [above] node {\scriptsize $(0), (1)$} (q2)
     (q2) edge [above] node {\scriptsize $(0), (1)$} (q3)
     (q3) edge [above] node {\scriptsize $(0), (1)$} (q4)
     (q4) edge [above] node {\scriptsize $(0), (1)$} (qn)
     (qn) edge [left] node {\scriptsize$\begin{array}{c}\\(0)\\\{ x_2 \}\end{array}$} (qf)
     (q0) edge [right] node {\scriptsize$\begin{array}{c}(1)\\\{ x_1 \}\\[2ex]\end{array}$} (q1b)
     (q1b) edge [above] node {\scriptsize $(0), (1)$} (q2b)
     (q2b) edge [above] node {\scriptsize $(0), (1)$} (q3b)
     (q3b) edge [above] node {\scriptsize $(0), (1)$} (q4b)
     (q4b) edge [above] node {\scriptsize $(0), (1)$} (qnb)
     (qnb) edge [left] node {\scriptsize$\begin{array}{c}(1)\\\{ x_2 \}\\[2ex]\end{array}$~} (qf)
     (qf) edge [loop above] node {\scriptsize $(0), (1)$} ();

\end{tikzpicture}
\caption{Automaton for $x_2 = x_1 + n \,\wedge\, X_1(x_1)
\Leftrightarrow X_1(x_2)$.}
\label{fig:automaton-shift}
\end{figure}

By removing the variable $x_2$ from all transition labels of this
automaton, one obtains an automaton ${\cal A}^{(n)}$ of size $O(n)$
that recognizes the models of $\exists x_2 (x_2 = x_1 + n \,\wedge\,
X_1(x_1) \Leftrightarrow X_1(x_2))$. One then observes that any
automaton recognizing the models of $\forall x_1 \exists x_2 (x_2 =
x_1 + n \,\wedge\, X_1(x_1) \Leftrightarrow X_1(x_2))$ essentially
checks that the value of the predicate $X_1$ is identical to itself
shifted by $n$ positions, in other words, that it is periodic with the
period $n$.  Such an automaton must therefore have a memory that can
store $n$ bits of information, hence its number of states must at
least be equal to $2^n$.  \qed
\end{proof}

\section{Conclusions}
\label{sec:conclusions}

This paper introduces a method for directly computing the effect of a
first-order universal quantifier on an infinite-word automaton
recognizing the set of models of a formula. It is applicable when the
first-order variables range over the natural numbers and their values
are encoded in unary notation. Among its potential applications, it
provides a solution for deciding the first-order theory
$\langle\mathbb{N}, < \rangle$ extended with uninterpreted unary predicates.

The operation on regular languages that corresponds to the effect of a
universal first-order quantifier has already been studied at the
theoretical level~\cite{Okhotin05}. Our contribution is to provide a
practical algorithm for computing it, that does not require to
complement infinite-word automata. This algorithm has an exponential
worst-case time complexity, which is unavoidable since there exist
automata for which universal quantification incurs an exponential
blowup in their number of states. The main advantage over the
complementation-based approach is however that this exponential cost
is not systematic, since only a fraction of the possible subsets of
states typically need to be constructed. The situation is similar to
the subset construction algorithm for determinizing finite-word
automata, which is able to handle in practice automata with millions
of states, in spite of its worst-case exponential cost.

Our solution is open to many possible improvements, one of them being
to extend the algorithm so as to quantify over several first-order
variables in a single operation. For future work, we plan to
generalize this algorithm to automata over more expressive structures,
such as the automata over linear orders defined in~\cite{BC07}. This
would make it possible to obtain an implementable decision procedure
for, e.g., the first-order theory $\langle\mathbb{R}, < \rangle$ with
uninterpreted unary predicates~\cite{LL66}. Another challenge would be
to develop a similar construction for second-order universal quantification.

\bibliography{paper}

\begin{thebibliography}{10}
\providecommand{\url}[1]{\texttt{#1}}
\providecommand{\urlprefix}{URL }
\providecommand{\doi}[1]{https://doi.org/#1}

\bibitem{BJW05}
Boigelot, B., Jodogne, S., Wolper, P.: An effective decision procedure for
  linear arithmetic over the integers and reals. ACM Tr. Comp. Logic
  \textbf{6}(3),  614--633 (2005)

\bibitem{BL04}
Boigelot, B., Latour, L.: Counting the solutions of {Presburger} equations
  without enumerating them. Theo. Comp. Sc.  \textbf{313}(1),  17--29 (2004)

\bibitem{BC07}
Bruyère, V., Carton, O.: Automata on linear orderings. Journal of Computer and
  System Sciences  \textbf{74}(1),  1--24 (2007)

\bibitem{Buchi60}
Büchi, J.R.: Weak second-order arithmetic and finite automata. Mathematical
  Logic Quarterly  \textbf{6}(1--6),  66--92 (1960)

\bibitem{Buchi62}
Büchi, J.R.: On a decision method in restricted second order arithmetic. In:
  Proc. Intl. Congr. on Logic, Methodology and Philosophy of Science. pp. 1--12
  (1962)

\bibitem{Downey72}
Downey, P.J.: Undecidability of {P}resburger arithmetic with a single monadic
  predicate letter. Tech. rep., Harvard University (1972)

\bibitem{Halpern91}
Halpern, J.Y.: Presburger arithmetic with unary predicates is {$\Pi_1^1$}
  complete. The Journal of Symbolic Logic  \textbf{56}(2),  637--642 (Jun 1991)

\bibitem{Klarlund97}
Klarlund, N.: Mona {\&} {Fido}: The logic-automaton connection in practice. In:
  Proc. 11th CSL Workshop. LNCS, vol.~1414, pp. 311--326. Springer (1997)

\bibitem{LL66}
Läuchli, H., Leonard, J.: On the elementary theory of linear order. Fundamenta
  Mathematicae  \textbf{59}(1),  109--116 (1966)

\bibitem{McNaughton66}
McNaughton, R.: Testing and generating infinite sequences by a finite
  automaton. Information and Control  \textbf{9}(5),  512--530 (1966)

\bibitem{Okhotin05}
Okhotin, A.: The dual of concatenation. Theo. Comp. Sc.  \textbf{345}(2--3),
  425--447 (2005)

\bibitem{Safra88}
Safra, S.: On the complexity of omega-automata. In: Proc. 29th FOCS. pp.
  319--327. {IEEE} Computer Society (1988)

\bibitem{SKR98}
Shiple, T.R., Kukula, J.H., Ranjan, R.K.: A comparison of {Presburger} engines
  for {EFSM} reachability. In: Proc. 10th CAV. LNCS, vol.~1427, pp. 280--292
  (1998)

\bibitem{Speranski13}
Speranski, S.O.: A note on definability in fragments of arithmetic with free
  unary predicates. Archive for Mathematical Logic  \textbf{52}(5-6),  507--516
  (2013)

\bibitem{Thomas90}
Thomas, W.: Automata on infinite objects. In: Handbook of Theoretical Computer
  Science, Volume {B}, pp. 133--191. Elsevier and {MIT} Press (1990)

\bibitem{Vardi07}
Vardi, M.: The {Büchi} complementation saga. In: Proc. 24th STACS. LNCS,
  vol.~4393, pp. 12--22. Springer (2007)

\end{thebibliography}

\end{document}